\newenvironment{proof}{\noindent\textbf{Proof}}{\hfill\qed}
\newcommand{\qed}{\hfill$\Box$}
\newtheorem{lemma}{Lemma}
\newtheorem{theorem}{Theorem}
\newtheorem{definition}{Definition}
\newtheorem{specification}{Specification}
\title{The Byzantine Brides Problem}
\author{
Swan Dubois\protect\footnote{UPMC Sorbonne Universit\'{e}s \& Inria, France, swan.dubois@lip6.fr} 
\and 
S{\'e}bastien Tixeuil\protect\footnote{UPMC Sorbonne Universit\'{e}s \& Institut Universitaire de France, France, sebastien.tixeuil@lip6.fr} 
\and 
Nini Zhu\protect\footnote{UPMC Sorbonne Universit\'{e}s, France}}
\date{}
\begin{document}

\maketitle

\begin{abstract}
We investigate the hardness of establishing as many stable marriages (that is, marriages that last forever) in a population whose memory is placed in some arbitrary state with respect to the considered problem, and where traitors try to jeopardize the whole process by behaving in a harmful manner. On the negative side, we demonstrate that no solution that is completely insensitive to traitors can exist, and we propose a protocol for the problem that is optimal with respect to the traitor containment radius. 
\end{abstract}

\section{Introduction}

After 1123 years of existence, the Byzantine Empire finally collapsed soon after the fall of Constantinople in 1453 by the Ottoman army (see Figure~\ref{constantinople}). The various wars that opposed armies in the previous years ravaged their homeland as well as the capital city, as a contemporary reported~\cite{B53b}: ``The blood flowed in the city like rainwater in the gutters after a sudden storm.'' 

\begin{figure}[htbp]
\centering
\includegraphics[height=5cm]{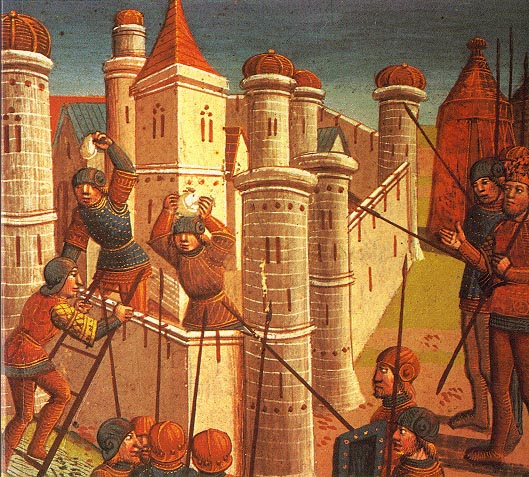}
\caption{Scene from the battle defending Constantinople, Paris 1499}
\label{constantinople}
\end{figure}

Allegedly, the main reason for the Byzantine defeat is that there were traitors amongst its leading generals~\cite{PSL80j,LSP82j}. With traitors at their cores, armies suffered significant losses, leaving mostly widows, orphans, and devastated homes. After the country was taken and the truce signed, the city was to rebuild, starting with its core roots: families. In the ancient days, strict guidelines were followed to form new marriages, like coming from the same social circles or being of opposite sex. In a wasted land with few homes still standing, those were no longer sustainable options. Stability of marriages was decided to be the most important criterium, rendering every other consideration irrelevant. So, general guidelines were to be followed by all survivors: \emph{(i)} do your best to make your marriage last, \emph{(ii)} don't be picky about whom you are married to, and \emph{(iii)} don't make others' marriage fail. Still, the Byzantine traitors that led the armies to their doom were hidden amongst the surviving population, and managed somehow to remain unnoticed. Their purpose was to cause as much havoc as possible, by any means necessary, without being caught for their socially inconvenient behavior. So, the reconstruction of the city could have been jeopardized by few nasty Byzantine brides or bridegrooms.  

The core problem Byzantine authorities were facing to establish as many stable marriages as possible lied in the following two observations:
\begin{enumerate}
\item the population was heavily shocked by the war that just stopped, and their state of mind was somewhat erratic: some could not remember they were previously married, some though they were previously married but never were, some though they were engaged and expected a response that would never come because the engagement was not remembered by the expected bride or bridegroom, etc,
\item the traitors could simulate emotional shock in order to stay undiscovered yet try to perturbate the global marriage process.
\end{enumerate}
So, the only difference between the general population and the traitors was their willingness to accommodate the stable marriage doctrine in their daily life.

In this paper, we investigate the hardness of establishing as many stable marriages (that is, marriages that last foverer) in a population whose memory is placed in some arbitrary state with respect to the considered problem, and where traitors try to jeopardize the whole process by behaving in a harmuful manner. On the negative side, we demonstrate that no solution that is completely insensitive to traitors can exist, and we propose a protocol for the problem that is optimal with respect to the traitor containment radius. 

\section{Model and Definitions}

\subsection{State Model}

A \emph{Byzantine city} $S=(V,L)$ consists of a set $V=\{v_1,v_2,\ldots,v_n\}$ of potential brides\footnote{Note that we use the word ``bride'' in the sequel of this paper to denote both brides and bridegrooms.} (or simply brides) and a set $L$ of potential marriages. A potential marriage is an unordered pair of distinct potential brides (this takes place before the Internet ages, so long distance marriage is not supposed to last forever, and only marriages occurring in a vicinity may be stable). A Byzantine city $S$ can be regarded as a graph whose vertex set is $V$ and whose link set is $L$, so in the sequel we use graph terminology to describe a Byzantine city $S$. We use the following notations: $n=|V|$, $m=|L|$ and $d(u,v)$ denotes the distance between two nodes $u$ and $v$ (\emph{i.e} the length of the shortest path between $u$ and $v$).

Potential brides $u$ and $v$ are called \emph{neighbors} if $(u,v)\in L$. The set of neighbors of a potential bride $v$ is denoted by $N_v$. We do not assume existence of unique identifiers for potential brides (Birth records have been destroyed by the war, and memory of each potential bride is unreliable). Instead we assume each potential bride may distinguish its neighbors from each other by locally labeling them.

For the sake of generality and the lack of reports concerning the remains of Constantinople after it has fallen, we consider that the Byzantine city has arbitrary yet connected topology. We adopt the \emph{shared state model} \cite{D74j} as a communication model, where each potential bride can directly and instantaneously get the current status of its neighbors.

The current memory that is maintained by a potential bride is denoted by the term of state, and may be further divided into variables. A potential bride may take actions that are prescribed by the authorities during the reconstruction of the Byzantine city. An action is simply a function that is executed in an atomic manner by the potential bride. The action executed by each potential bride is described by a finite set of guarded commands of the form $\langle$guard$\rangle\longrightarrow\langle$statement$\rangle$. Each guard of potential bride $u$ is a Boolean expression involving the state of $u$ and its neighbors.

A global state of a Byzantine city is called a \emph{configuration} and is specified by the product of states of all potential brides. We define $C$ to be the set of all possible configurations of a Byzantine city $S$. For a potential bride set $R \subseteq V$ and two configurations $\gamma$ and $\gamma'$, we denote $\gamma \stackrel{R}{\mapsto} \gamma'$ when $\gamma$ changes to $\gamma'$ by executing an action of each potential bride in $R$ simultaneously. Notice that $\gamma$ and $\gamma'$ can be different only in the states of potential brides in $R$. For completeness of execution semantics, we should clarify the configuration resulting from simultaneous actions of neighboring potential brides. The action of a potential bride depends only on the current state at $\gamma$ and the states of the neighbors at $\gamma$, and the result of the action reflects on the state of the potential bride at $\gamma '$.

We say that a potential bride is \emph{enabled} in a configuration $\gamma$ if the guard of at least one of its actions evaluates as true in $\gamma$. A \emph{schedule} of a Byzantine city is an infinite sequence of potential bride sets. Let $Q=R^1, R^2, \ldots$ be a schedule, where $R^i \subseteq V$ holds for each $i\ (i \ge 1)$. An infinite sequence of configurations $e=\gamma_0,\gamma_1,\ldots$ is called an \emph{execution} from an initial configuration $\gamma_0$ by a schedule $Q$, if $e$ satisfies $\gamma_{i-1} \stackrel{R^i}{\mapsto} \gamma_i$ for each $i\ (i \ge 1)$. Potential bride actions are executed atomically, and we distinguish some properties on the scheduler (or daemon). A \emph{distributed daemon} schedules the actions of potential brides such that any subset of potential brides can simultaneously execute their actions. We say that the daemon is \emph{central} if it schedules action of only one potential bride at any step. The set of all possible executions from $\gamma_0\in C$ is denoted by $E_{\gamma_0}$. The set of all possible executions is denoted by $E$, that is, $E=\bigcup_{\gamma\in C}E_{\gamma}$. We consider \emph{asynchronous} Byzantine cities but we add the following assumption on schedules: any schedule is central and fair (meaning that only one enabled potential bride is chosen at any step and that no potential bride can be infinitely often enabled without being chosen by the scheduler)

In this paper, we consider (permanent) \emph{Byzantine faults}: a Byzantine potential bride (\emph{i.e.} a Byzantine-faulty potential bride) can exhibit arbitrary behavior independently of its actions. If $v$ is a Byzantine-faulty potential bride, $v$ can repeatedly change his (or her) state arbitrarily. For a given execution, the number of faulty potential brides is arbitrary.

\subsection{Self-Stabilizing Protocols Resilient to Byzantine Faults}

As the problem we solve is meant for stability and should reach a global fixed point, we use a \emph{specification predicate} (shortly, specification) to define it. This specification predicate is denoted by $spec(v)$, for each potential bride $v$. A configuration is a desired one if every potential bride satisfies $spec(v)$. A specification $spec(v)$ is a Boolean expression on variables of $P_v~(\subseteq P)$ where $P_v$ is the set of potential brides whose state (or part of) appear in $spec(v)$. The variables appearing in the specification are called \emph{output variables} (shortly, \emph{O-variables}). 

A \emph{self-stabilizing protocol} (\cite{D74j,D00b,T09bc}) is a protocol that eventually reaches a \emph{legitimate configuration}, where $spec(v)$ holds at every potential bride $v$, regardless of the initial configuration. Once it reaches a legitimate configuration, every potential bride never changes its O-variables and always satisfies $spec(v)$. From this definition, a self-stabilizing protocol is expected to recover from any number and any type of transient faults. However, the recovery from any configuration is guaranteed only when every potential bride honestly executes its action from the configuration, \emph{i.e.}, self-stabilization does not consider the possibility of Byzantine-faulty potential brides.

When (permanent) Byzantine-faulty potential brides exist, they may not satisfy $spec(v)$. In addition, honest potential brides near the Byzantine-faulty potential brides can be influenced and may be unable to satisfy $spec(v)$. Nesterenko and Arora~\cite{NA02c} define a \emph{strictly stabilizing protocol} as a self-stabilizing protocol resilient to unbounded number of Byzantine-faulty actors. 

\begin{definition}($c$-honest potential bride)
A potential bride is $c$-honest if it is honest (\emph{i.e.} not Byzantine-faulty) and located at distance more than $c$ from any Byzantine-faulty potential bride.
\end{definition}

\begin{definition}($(c,f)$-containment)
\label{def:cfcontained}
A configuration $\gamma$ is \emph{$(c,f)$-contained} for specification $spec$ if, given at most $f$ Byzantine-faulty potential brides, in any execution starting from $\gamma$, every $c$-honest potential bride $v$ always satisfies $spec(v)$ and never changes its O-variables.
\end{definition}

The parameter $c$ of Definition~\ref{def:cfcontained} refers to the \emph{containment radius} defined by Nesterenko and Arora~\cite{NA02c}. The parameter $f$ refers explicitly to the number of Byzantine-faulty potential brides, while \cite{NA02c} dealt with an arbitrary number of Byzantine faults (that is, $f\in\{0\ldots n\}$).

\begin{definition}($(c,f)$-strict stabilization)
\label{def:cfstabilizing}
A protocol is \emph{$(c,f)$-strictly stabilizing} for specification $spec$ if, given at most $f$ Byzantine-faulty potential brides, any execution $e=\gamma_0,\gamma_1,\ldots$ contains a configuration $\gamma_i$ that is $(c,f)$-contained for $spec$.
\end{definition}

A specification is $r$-restrictive~\cite{NA02c} if it prevents combinations of states that belong to two potential brides $u$ and $v$ that are at least $r$ hops away. An important consequence for our purpose is that the containment radius of protocols solving $r$-restrictive specifications is at least $r$. 

\section{Specification}

The problem of maximal marriage construction is a well known problem in Distributed Computing. Given a graph $G=(V,E)$, a marriage $M$ on $G$ is a subset of $E$ such that any node of $V$ belongs to at most one edge of $M$. A marriage is maximal if there exists no marriage $M'$ such that $M\subsetneq M'$.

\begin{specification}(Maximal Marriage)~
\begin{description}
\item{Liveness:} The protocol terminates in a finite time.
\item{Safety:} In the terminal configuration, there exists a maximal marriage 
\end{description}
\end{specification}

Each potential bride $v$ has a variable $pref_v$ which belongs to the set $N_v\cup \{null\}$. This variable refers to the preferred neighbor of $v$ for a marriage. For example, if $pref_v=u$ then $v$ wants to add the edge $\{v,u\}$ to the marriage. For any potential bride $v$, we define the following set of predicates over the Byzantine city: \emph{(i)} $proposing_v$ denotes the fact that $v$ is proposing marriage to some neighbor $u$, but that $u$ has not shown interest yet, \emph{(ii)} $married_v$ denotes that $v$ has proposed $u$ and $u$ has proposed $v$ back, \emph{(iii)} $doomed_v$ denotes that $v$ has proposed neighbor $u$, but $u$ has proposed somebody else than $v$, \emph{(iv)} $dead_v$ denotes that $v$ has no hope of getting married ever (all neighbors proposed to somebody else), and \emph{(v)} $single_v$ means that $v$ has not proposed anyone and has at least one neighbor likewise. Formally:

\[\begin{array}{ccc}
proposing_v&\equiv&(pref_v=u)\wedge(pref_u=null)\\
married_v&\equiv&(pref_v=u)\wedge(pref_u=v)\\
doomed_v&\equiv&(pref_v=u)\wedge(pref_u=w)\wedge(w\neq v)\\
dead_v&\equiv&(pref_v=null)\wedge(\forall u\in N_v,married(u)=true)\\
single_v&\equiv&(pref_v=null)\wedge(\exists u\in N_v, married(u)\neq true)
\end{array}\]

It is easy to verify that for any configuration $\gamma$ and for any potential bride $v$, exactly one of these predicates holds for $v$ in $\gamma$.

If the Byzantine city is subject to Byzantine failures, obviously no protocol can satisfy the classical specification of the problem. Now, a potential bride $v$ is considered locally legitimate when it satisfies the following predicate: $spec(v)\equiv married_v\vee dead_v$. We now describe the global properties that are satisfied by a $(c,f)$-contained configuration for $spec$. Informally, we can prove that there exists a maximal marriage on a subset of $S$ in such a configuration and that this subset includes at least the set of $c$-honest potential brides. In the following, $V_c$ denotes the set of $c$-honest potential brides (\emph{i.e.}, $V_c=\{v\in V|\forall b\in B,d(v,b)>c\}$).

\begin{definition}($(c,\gamma)$-marriage subset)
Given an integer $c>0$ and a configuration $\gamma$, the $(c,\gamma)$-marriage subset $S^*_{c,\gamma}$ of $S$ is the subset induced by the following set of potential brides:
\[V'=V_c\cup\{v\in V\setminus V_c|\exists u\in V_c,pref_v=u\wedge pref_u=v\}\]
\end{definition}

Now, we can state formally the property satisfied by any $(c,f)$-contained configuration for $spec$.

\begin{lemma}
\label{lemma1}
In any $(c,f)$-contained configuration for $spec$, there exists a maximal marriage on the subset $S^*_{c,\gamma}$.
\end{lemma}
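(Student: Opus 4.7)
The plan is to exhibit an explicit maximal marriage on the induced subgraph $S^*_{c,\gamma}$, namely
\[M=\bigl\{\{u,v\}\subseteq V':pref_u=v\text{ and }pref_v=u\bigr\},\]
the set of mutually preferring pairs both of whose endpoints lie in $V'$. That $M$ is a valid marriage is immediate from the model: each bride $w$ owns a single variable $pref_w$ pointing to at most one neighbor, so $w$ appears in at most one edge of $M$. It remains to show that $M$ is maximal in $S^*_{c,\gamma}$, i.e. that for every edge $\{u,v\}\in L$ with $u,v\in V'$ and $\{u,v\}\notin M$, at least one of $u,v$ is already covered by some edge of $M$.

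I would argue maximality by a case analysis on whether $u$ and $v$ lie in $V_c$. If either endpoint, say $v$, lies in $V'\setminus V_c$, then by definition of $V'$ there exists $u'\in V_c$ with $pref_v=u'$ and $pref_{u'}=v$; hence $\{v,u'\}\in M$ and $v$ is matched. Suppose instead both $u,v\in V_c$. The $(c,f)$-containment hypothesis guarantees that each of them satisfies $spec$, i.e. $married\vee dead$. If $u$ is $married$ to some neighbor $w$, then $w\neq v$ (otherwise $\{u,v\}$ would already be in $M$, contradicting our assumption); moreover, whether $w\in V_c$ or not, $w$ belongs to $V'$ (in the latter case because the pair $(u,w)$ witnesses the second clause of the definition of $V'$), so $\{u,w\}\in M$ and $u$ is matched. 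If instead $u$ is $dead$, then $pref_u=null$ and every neighbor of $u$, in particular $v$, satisfies $married$, which reduces to the previous sub-case applied to $v$ and yields an edge of $M$ covering $v$.

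The step I expect to require the most care is the verification that the partner $w$ of a $c$-honest married bride always lies in $V'$, so that the witnessing edge $\{u,w\}$ really belongs to the induced subgraph $S^*_{c,\gamma}$. This is precisely what the second clause in the definition of $(c,\gamma)$-marriage subset is tailored to deliver; once this containment is unpacked, the sub-cases of the case analysis close routinely, and combining matching and maximality yields the claim. Note that the $dead$ case crucially relies on the guarantee that a $c$-honest bride's view of $married(\cdot)$ on a neighbor (which could itself be at distance exactly $c+1$ of a Byzantine-faulty bride) actually reflects a true mutual preference, which is a direct consequence of $pref$ being an O-variable and of the stability clause of Definition~\ref{def:cfcontained}.
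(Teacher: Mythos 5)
Your proof is correct and follows essentially the same route as the paper's: your $M$ coincides with the paper's $M_c=\{\{v,pref_v\}\mid v\in V_c\wedge pref_v\neq null\}$ (any mutual pair inside $V'$ must have an endpoint in $V_c$, since a node of $V'\setminus V_c$ has its unique $pref$ pointing into $V_c$), and your direct case analysis on $V_c$-membership mirrors the paper's argument by contradiction. The only cosmetic differences are that your symmetric definition of $M$ makes the ``at most one edge per bride'' property immediate where the paper derives mutuality from $spec$, and that your closing appeal to O-variables and the stability clause of Definition~\ref{def:cfcontained} is unnecessary, since $married_u$ is by definition a predicate on the actual configuration $\gamma$ rather than on any bride's local view.
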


\begin{proof}
Let ${\gamma}$ be a $(c,f)$-contained configuration for $spec$. Hence, $\gamma$ satisfies $\forall v \in V_c$,$\ married_v\vee dead_v$. Let us define the following edge set $M_c=\{\{v, pref_v\}|v\in V_c\wedge pref_v\neq null\}$.

First, we show that $M_c$ is a marriage on $S^*_{c,\gamma}$. Indeed, if $\{v, pref_v\}$ is an edge of $M_c$, then $v$ satisfies $married_v $ (since $v$ satisfies $spec(v)$ and $pref_v\neq null$ by construction of $M_c$). Hence, we have $pref_{pref_v}=v$. Consequently, $v$ and $pref_v$ appear only once in $M_c$.

Now, we show that $M_c$ is maximal. By contradiction, assume it is not the case. Consequently, there exists two neighbors $v$ and $u$ (with $v\in V'$ and $u\in V'$) such that $\{v,u\}\notin M_c$ and $M_c'=M_c\cup\{\{v,u\}\}$ is a marriage on $S^*_{c,\gamma}$. Let us study the following cases: 

\begin{description}
\item[Case 1:] $u\in V_c$ and $v\in V_c$.\\
If $married_v\wedge married_u$ holds, then $\{v,u\}\in M_c$ by construction that contradicts the hypothesis. If $dead_v\wedge dead_u$ holds, then we can deduce that $(pref_v=null)\wedge(married_u)$ (since $dead_v$ holds), that contradicts $dead_u$. If $dead_v\wedge married_u$ (resp. $married_v\wedge dead_u$) holds, then $\{v,pref_v\}\in M_c$ with $pref_v\neq u$ (resp. $\{u,pref_u\}\in M_c$ with $pref_u\neq v$) and we can deduce that $v$ (resp. $u$) appears in two distinct edges of $M_c'$. Then, $M_c'$ is not a marriage that contradicts the hypothesis. 
\item[Case 2:] $u\notin V_c$ and $v\notin V_c$.\\
According to the assumption, $\{u,v\}\notin M_c$. Since $v\in V'\setminus V_c\wedge u\in V'\setminus V_c$, we have $\{v,pref_v\}\in M_c$ with $pref_v\neq u\wedge pref_v\in V_c$ (resp. $\{u,pref_u\}\in M_c$ with $pref_u\neq v\wedge pref_u\in V_c$) and we can deduce that $v$ (resp. $u$) appears in two distinct edges of $M_c'$. Then, $M_c'$ is not a marriage that contradicts the hypothesis.
\item[Case 3:] $u\in V_c$ and $v\notin V_c$.\\
According to the assumption, $\{v,u\}\notin M_c$. Since $v\in V'\setminus V_c\wedge u\in V_c$,we have $\{v,pref_v\}\in M_c$ with $pref_v\neq u\wedge pref_v\in V_c$ (since if $pref_v=u$, then $\{v,u\}\in M_c$ that contradicts the hypothesis) and we can deduce that $v$ appears in two distinct edges of $M_c'$. Then, $M_c'$ is not a marriage that contradicts the hypothesis.
\end{description}
\end{proof}

The result of Lemma~\ref{lemma1} motivates the design of a strictly stabilizing protocol for $spec$. Indeed, even if this specification is local, it induces a global property in $(c,f)$-contained configuration for $spec$ since there exists a maximal marriage of a well-defined sub-graph in such a configuration.

\section{Strictly Stabilizing Maximal Marriage}

This section presents our strictly stabilizing solution for the maximal marriage problem. We also prove its correctness and its optimality with respect to containment radius.

\subsection{Our Protocol}

Our strictly-stabilizing maximal marriage protocol, called $\mathcal{SSMM}$ is formally presented as Algorithm~\ref{algo1}. The basis of the protocol is the well-known self-stabilizing Maximal Marriage protocol by Huang and Hsu~\cite{HH92j}, but we allow potential brides to remember their past sentimental failures (\emph{e.g.} an aborted marriage du to the mate being Byzantine-faulty, or a proposal that didn't end up in an actual marriage) in order not to repeat the same mistakes forever when Byzantine-faulty brides participate to the global marriage process. The ideas that underly the marriage process for honest potential brides follows the directives discussed in the introduction: \emph{(i)} once married, honest brides never divorce and never propose to anyone else, \emph{(ii)} honest brides may propose to any neighbor, and if proposed, will accept marriage gratefully, \emph{(iii)} if they realize they previously proposed to somebody that is potentially married to somebody else, they will cancel their proposal and refrain proposing to the same potential bride soon. A potential bride $v$ maintain two variables: $pref_v$, that was already discussed in the problem specification section, and $old\_pref_v$ that is meant to recall past sentimental failures. Specifically, $old\_pref_v$ stores the last proposal made to a neighbor that ended up doomed (because that neighbor preferred somebody else, potentially because of Byzantine-faulty divorce, or because of genuine other interest that occurred concurrently). Then, the helper function $next\_v$ helps $v$ to move on with past failures by preferring the next mate not to be the same as previously (in a Round Robin order): the same potential bride that caused a sentimental breakup may be chosen twice in a row only if the only one available.

\begin{algorithm}
\caption{$\mathcal{SSMM}$: Strictly-stabilizing maximal marriage for potential bride $v$}
\label{algo1}
\begin{description}
\item[Variables:]~\\
$pref_v\in N_v\cup\{null\}$: preferred neighbor of $v$\\
$old\_pref_v\in N_v$: previous preferred neighbor of $v$
\item[Function:]~\\
For any $u\in\{v,null\}$, $next_v(u)$ is the first neighbor of $v$ greater than $old\_pref_v$ (according to a round robin order) such that $pref_{next_v(u)}=u$
\item [{Rules:}]~\\
/* Don't be picky: Accept any mate (round robin priority) */\\
$(M)::(pref_v=null)\wedge(\exists u\in N_v,pref_u=v)\longrightarrow pref_v:=next_v(v)$\\
/* Don't be picky: Propose to anyone (round robin priority) */\\
$(S)::(pref_v=null)\wedge(\forall u\in N_v,pref_u\neq v)\wedge(\exists u\in N_v, pref_u=null)\longrightarrow pref_v:=next_v(null)$\\
/* Don't cause others to break up: give up proposing if doomed */\\
$(A)::(pref_v=u)\wedge(pref_u\neq v)\wedge(pref_u\neq null)\longrightarrow old\_pref_v:=pref_v;pref_v:=null$
\end{description}
\end{algorithm}

\subsection{Proof of Strict Stabilization}

In their paper~\cite{HH92j}, Hsu and Huang prove the self-stabilizing property of their maximal marriage algorithm using a variant function. A variant function is a function that associates to any configuration a numerical value. This function is designed such that: \emph{(i)} the function is bounded, \emph{(ii)} any possible step of the algorithm decreases strictly the value of the function, and \emph{(iii)} the function reaches its minimal value if and only if the corresponding configuration is legitimate. Once such a function is defined and its properties are proved, we can easily deduce the convergence of the protocol. Indeed, whatever the initial configuration is, the associate value by the variant function is bounded (by property \emph{(i)}) and any execution starting from this configuration reaches in a finite time the minimal value of the function (by property \emph{(ii)}). Then, property \emph{(iii)} allows us to conclude on the convergence of the algorithm.

Our proof of strict-stabilization for our protocol also relies on a variant function (borrowed from the one of \cite{T94j}). We choose a variant function where we consider only potential brides of $V_2$. For any configuration $\gamma\in\Gamma$, let us define the following functions:
\[\begin{array}{rcl}
w(\gamma)&=&|\{v\in V_2|proposing_v\}|\\
c(\gamma)&=&|\{v\in V_2|doomed_v\}|\\
f(\gamma)&=&|\{v\in V_2|single_v\}|\\
P(\gamma)&=&(w(\gamma)+ c(\gamma)+ f(\gamma), 2c(\gamma) + f(\gamma))
\end{array}\]
Note that our variant function $P$ satisfies property \emph{(i)} by construction.

Then, we define the following configuration set:
\[\mathcal{LC}_2=\{\gamma\in\Gamma|\forall v\in V_2,spec(v)\}\]
In other words, $\mathcal{LC}_2$ is the set of configurations in which any potential bride $v$ of $V_2$ satisfies $spec(v)$.

We can now explain the road-map of our proof. After two preliminaries results (Lemmas~\ref{lemma2} and \ref{lemma3}) that are used in the sequel, we first show that any configuration of the set $\mathcal{LC}_2$ is $(2,n)$-contained for $spec$ (Lemma~\ref{lemma4}), that is, the set $\mathcal{LC}_2$ is closed by actions of $\mathcal{SSMM}$. Then, there remains to prove the convergence of the protocol to configurations of $\mathcal{LC}_2$ (starting from any configuration) to show the strict-stabilization of $\mathcal{SSMM}$. The remainder of the proof is devoted to the study of properties of our variant function $P$. First, we show in Lemma~\ref{lemma5} that any configuration $\gamma$ that satisfies $P(\gamma)=(0,0)$ belongs to $\mathcal{LC}_2$. This proves that $P$ satisfies the property \emph{(iii)}. Unfortunately, we can prove that our variant function $P$ does not satisfy property \emph{(ii)} (strict decreasing) since Byzantine faults may lead some potential brides to take actions that increase the function value. Nevertheless, we prove in Lemmas~\ref{lemma6}, \ref{lemma7}, and \ref{lemma8} that this case may appear only a finite number of times and that our variant function is eventually strictly decreasing, which is sufficient to prove the convergence to $\mathcal{LC}_2$ in Lemma~\ref{lemma9}. Finally, Lemmas~\ref{lemma4} and \ref{lemma9} permit to conclude with Theorem~\ref{theorem1} that establishes the $(2,n)$-strict stabilization of $\mathcal{SSMM}$. A detailed proof follows.

\begin{lemma}
\label{lemma2}
For any execution $e=\gamma_0,\gamma_1\ldots$,\\
\noindent - if $married_v$ holds in $\gamma_0$ for a potential bride $v\in V_1$, then $married_v$ holds in $\gamma_i$ for all $i\in \mathbb{N}$; and
\noindent - if $dead_v$ holds in $\gamma_0$ for a potential bride $v\in V_2$, then $dead_v$ holds in $\gamma_i$ for all $i\in \mathbb{N}$.
\end{lemma}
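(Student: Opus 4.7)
The plan is to treat the two parts separately, proving the first by an invariance-style argument directly on the guards, and then leveraging the first part to obtain the second. I would proceed by induction on the index $i$ of the execution, so the real task in each case is to examine each rule that could, a priori, disturb the predicate at step $i+1$.

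For the first item, suppose $married_v$ holds at $\gamma_0$ with $v\in V_1$, i.e.\ $pref_v=u$ and $pref_u=v$. Since $v\in V_1$, the neighbor $u$ is honest and therefore respects the guards of $\mathcal{SSMM}$. I would enumerate the three rules for each of $v$ and $u$: rules $(M)$ and $(S)$ require $pref=null$, which fails at both endpoints; rule $(A)$ for $v$ requires $pref_u\neq v$ and $pref_u\neq null$, which contradicts $pref_u=v$, and symmetrically rule $(A)$ for $u$ is disabled because $pref_v=u$. Thus neither $v$ nor $u$ is enabled, so the states of $v$ and $u$ are unchanged in $\gamma_1$; the same argument applies inductively at every subsequent step, yielding the invariance of $married_v$.

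For the second item, fix $v\in V_2$ with $dead_v$ in $\gamma_0$. The key observation is that each neighbor $u$ of $v$ lies in $V_1$ (because $d(u,B)\ge d(v,B)-1>1$) and already satisfies $married_u$. Let $w$ denote $u$'s partner; since $pref_v=null\neq u$, we have $w\neq v$. Applying the first item to $u\in V_1$, the marriage between $u$ and $w$ persists throughout the execution, so in every $\gamma_i$ we have $pref_u=w\notin\{v,null\}$. This immediately disables rules $(M)$ and $(S)$ at $v$ (no neighbor has $pref_u=v$ and no neighbor has $pref_u=null$), while rule $(A)$ is disabled because $pref_v=null$. Hence $v$ never executes an action, its state remains unchanged, and all its neighbors remain married, so $dead_v$ is preserved.

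The only mildly delicate point is the propagation of honesty from $V_2$ down to $V_1$ at the neighbors of $v$; this is what forces the dead invariant to require a radius of $2$ rather than $1$, since we need the first part of the lemma to apply to every $u\in N_v$ in order to freeze the surrounding marriages. The remainder is a straightforward case analysis on the three guarded commands.
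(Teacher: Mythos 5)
Your proof is correct and follows essentially the same route as the paper's: for the married case you check that neither endpoint of the marriage (both honest since $v\in V_1$) is enabled by any of the three rules, and for the dead case you use the fact that every neighbor of $v\in V_2$ lies in $V_1$, invoke the first part to freeze the surrounding marriages, and observe that $v$ itself is never enabled. The paper's argument is just a terser version of your rule-by-rule case analysis.
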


\begin{proof}
Let $v$ be a potential bride of $V_1$. Hence, any neighbor of $v$ is a honest potential bride. If $married_v$ holds in a configuration $\gamma_0$, then $pref_v=u\wedge pref_u=v$ holds in $\gamma_0$ by definition. We can observe that $v$ and $u$ are not enabled by $(M),(S)$, or by $(A)$ in $\gamma_0$. Consequently, $v$ and $u$ are not activated in any execution $e$ starting from $\gamma_0$. In conclusion, $married_v$ holds in any configuration of $e$.

Let $v$ be a potential bride of $V_2$. If $dead_v$ holds in a configuration $\gamma_0$, then $pref_v=null\wedge (\forall u\in N_v, married_u=true)$ holds in $\gamma_0$ by definition. Note that any neighbor of $v$ belongs to $V_1$ (since $v\in V_2$). If $married_u$ holds in $\gamma_0$, then $married_u$ holds in any configuration of any execution starting from $\gamma_0$. Potential Bride $v$ is not enabled by $\mathcal{SSMM}$ in $\gamma_0$. No neighbor of $v$ is enabled (according to the first part of the proof). Consequently, $dead_v$ holds in any configuration of any execution starting from $\gamma_0$.
\end{proof}

\begin{lemma}
\label{lemma3}
For any configuration $\gamma\in\mathcal{LC}_2$, no potential bride of $V_2$ is enabled by $\mathcal{SSMM}$ in $\gamma$.
\end{lemma}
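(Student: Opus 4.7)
The plan is to take an arbitrary configuration $\gamma\in\mathcal{LC}_2$ and an arbitrary potential bride $v\in V_2$, then simply check each of the three guards of $\mathcal{SSMM}$ and show that none of them evaluates to true at $v$ in $\gamma$. The hypothesis $\gamma\in\mathcal{LC}_2$ gives us immediately that $spec(v)$ holds, so we split on the two disjuncts: either $married_v$ or $dead_v$. I expect no real obstacle here: the whole argument is a mechanical case analysis. The only subtlety worth noting is that since $v\in V_2$, every neighbor $u$ of $v$ lies in $V_1$ (by the triangle inequality applied to distances from any Byzantine bride), but in fact we do not even need that observation for this lemma, because the guards refer only to the predicates $married_u$, $pref_u=v$, and $pref_u=null$, which can be read off directly from the configuration regardless of the status of $u$.

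In the first case, $married_v$ means $pref_v=u$ and $pref_u=v$ for some neighbor $u$. Since $pref_v\neq null$, the guards of $(M)$ and $(S)$ both fail at $v$. The guard of $(A)$ requires $pref_u\neq v$, which is contradicted by $pref_u=v$, so $(A)$ also fails. Hence $v$ is not enabled.

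In the second case, $dead_v$ means $pref_v=null$ and every neighbor $u$ of $v$ satisfies $married_u$. The condition $pref_v=null$ immediately rules out $(A)$. For any neighbor $u$ of $v$, $married_u$ unfolds to $pref_u=w$ and $pref_w=u$ for some $w\in N_u$; in particular $pref_u\neq null$, which kills the existential clause of $(S)$. Moreover, if one had $pref_u=v$, then $married_u$ would force $pref_v=u$, contradicting $pref_v=null$; so no neighbor $u$ satisfies $pref_u=v$, which kills the existential clause of $(M)$. Therefore none of the three rules is enabled at $v$, as required.
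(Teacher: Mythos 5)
Your proof is correct and follows essentially the same route as the paper: split on the two disjuncts $married_v$ and $dead_v$ given by $spec(v)$, and mechanically check that each of the three guards fails. Your explicit derivation that $pref_u\neq v$ in the dead case (via $married_u$ forcing $pref_v=u$, contradicting $pref_v=null$) just spells out a step the paper leaves implicit.
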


\begin{proof}
Let ${\gamma}$ be a configuration of $\mathcal{LC}_2$. By definition, $\gamma$ satisfies $\forall v\in V_2, married_v\vee dead_v$. Let $v$ be a potential bride of $V_2$.

If $married_v$ holds in ${\gamma}$, then we have $pref_v=u$ and $pref_u=v$ by definition. We can observe that $v$ is not enabled by rules $(M)$ and $(S)$ in ${\gamma}$ since $pref\neq null$ and that $v$ is not enabled by rule $(A)$ in ${\gamma}$ since $pref_u=v$.

If $dead_v$ holds in ${\gamma}$, then we have $pref_v=null$ and $\forall u\in N_v, married_u=true$ by definition. We can observe that $v$ is not enabled by rule $(A)$ in ${\gamma}$ since $pref_v=null$ and that $v$ is not enabled by rules $(M)$ and $(S)$ in ${\gamma}$ since $\forall u\in N_v,married_u\Rightarrow\exists r_u,pref_u=r_u\neq v\neq null$.

In both case, $v$ is not enabled in ${\gamma}$. Hence, no potential bride of $V_2$ is enabled in ${\gamma}$.
\end{proof}

The definition of $\mathcal{LC}_2$ and Lemma~\ref{lemma2} allow us to state the following lemma:

\begin{lemma}
\label{lemma4}
Any configuration of $\mathcal{LC}_2$ is $(2,n)$-contained for $spec$.
\end{lemma}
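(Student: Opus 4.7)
The plan is to unfold the definition of $(2,n)$-containment and verify, for an arbitrary $v \in V_2$ and an arbitrary execution $e=\gamma_0,\gamma_1,\ldots$ starting from $\gamma_0 \in \mathcal{LC}_2$, that (a) $spec(v)$ holds in every $\gamma_i$, and (b) $pref_v$ (the sole O-variable) never changes along $e$. Since $\mathcal{LC}_2$ requires $spec(v) = married_v \vee dead_v$ for every $v \in V_2$, the natural move is a case split on which disjunct holds in $\gamma_0$, and then to invoke Lemma~\ref{lemma2} in each case.

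First, if $married_v$ holds in $\gamma_0$, observe that $V_2 \subseteq V_1$ (a 2-honest bride is in particular 1-honest), so the first item of Lemma~\ref{lemma2} applies and yields $married_v$ in every $\gamma_i$; hence $spec(v)$ holds throughout. Moreover, the argument used in the proof of Lemma~\ref{lemma2} shows that $v$ is never activated along $e$ (neither $(M)$, $(S)$, nor $(A)$ is ever enabled at $v$ as long as $married_v$ holds), so $pref_v$ never changes. Second, if $dead_v$ holds in $\gamma_0$, the second item of Lemma~\ref{lemma2} applies directly (since $v \in V_2$) and gives $dead_v$ in every $\gamma_i$, hence $spec(v)$ throughout; and because $dead_v$ forces $pref_v = null$ in every $\gamma_i$, the value of $pref_v$ is frozen as well.

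Together these two cases establish that every $v \in V_2$ always satisfies $spec(v)$ and never modifies its O-variable, which is exactly Definition~\ref{def:cfcontained} with $c=2$. The bound $f=n$ is free: the argument makes no assumption on the number or location of Byzantine brides beyond the fact that they lie outside $V_2$, which is automatic from the very definition of $V_2$. Since $v$ was arbitrary in $V_2$, this concludes that $\gamma_0$ is $(2,n)$-contained for $spec$.

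The only subtle point is that Lemma~\ref{lemma2} as stated guarantees persistence of the predicates $married_v$ and $dead_v$ but does not explicitly mention persistence of $pref_v$; one has to inspect its proof and recall that the persistence is obtained by observing that $v$ (and, in the married case, its mate $u$) is never enabled, which implies in particular that $pref_v$ cannot be rewritten. This is the only place where a careful reading is required; the rest of the proof is a straightforward unwinding of definitions.
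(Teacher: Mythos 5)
Your proof is correct and follows essentially the same route as the paper, which derives Lemma~4 directly from the definition of $\mathcal{LC}_2$ together with the persistence guarantees of Lemma~2 (the paper in fact omits a written-out proof and just cites these two ingredients). Your additional care about the persistence of the O-variable $pref_v$, including the observation that $V_2\subseteq V_1$ for the married case, fills in exactly the details the paper leaves implicit.
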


\begin{lemma}
\label{lemma5}
Any configuration $\gamma\in\Gamma$ satisfying $P(\gamma)=(0,0)$ belongs to $\mathcal{LC}_2$.
\end{lemma}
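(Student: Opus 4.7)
The plan is to unpack the definition of $P$ and use the exhaustivity/mutual exclusion of the five status predicates already noted in the Specification section.

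First I would observe that $w(\gamma)$, $c(\gamma)$, and $f(\gamma)$ are cardinalities of subsets of $V_2$, hence non-negative integers. From $P(\gamma) = (0,0)$, the first coordinate gives $w(\gamma) + c(\gamma) + f(\gamma) = 0$, which forces $w(\gamma) = c(\gamma) = f(\gamma) = 0$. Consequently, no potential bride in $V_2$ satisfies $proposing_v$, no one satisfies $doomed_v$, and no one satisfies $single_v$. (The second coordinate of $P$ is then automatically $0$ and plays no role here.)

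Next I would invoke the fact established just after the definitions of the five predicates, namely that for every configuration and every potential bride, exactly one of $proposing_v$, $married_v$, $doomed_v$, $dead_v$, $single_v$ holds. Combined with the previous step, this means that for every $v \in V_2$, the unique predicate that holds must be either $married_v$ or $dead_v$. Hence $spec(v) \equiv married_v \vee dead_v$ is satisfied for every $v \in V_2$, which is exactly the definition of $\mathcal{LC}_2$, so $\gamma \in \mathcal{LC}_2$.

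There is essentially no obstacle: the proof is a two-line bookkeeping argument, and the only subtlety is to remember to cite the pentachotomy of the five status predicates rather than re-prove it. I would not even need the full pair $(0,0)$; vanishing of the first coordinate already forces the conclusion, since it implies vanishing of all three counts.
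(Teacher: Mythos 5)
Your proof is correct and follows essentially the same route as the paper's: both read off $w(\gamma)+c(\gamma)+f(\gamma)=0$ from the first coordinate, conclude that no bride of $V_2$ is proposing, doomed, or single, and then invoke the exhaustivity of the five status predicates to get $married_v\vee dead_v$ for every $v\in V_2$. Your added remarks (non-negativity of the counts, redundancy of the second coordinate) are accurate but do not change the argument.
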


\begin{proof}
If a configuration ${\gamma}\in\Gamma$ satisfies $P(\gamma)=(0,0)$, then $w(\gamma)+f(\gamma)+c(\gamma)=0$. Hence, no potential bride of $V_2$ is proposing, single, or doomed. Every potential bride $v$ of $V_2$ satisfies $married_v\vee dead_v$. By definition of $\mathcal{LC}_2$, we have ${\gamma}\in \mathcal{LC}_2$. 
\end{proof}

The following lemma is proved in a similar way as the corresponding one of \cite{T94j} (considering only potential brides of $V_2$).

\begin{lemma}
\label{lemma6}
For any configuration $\gamma\notin\mathcal{LC}_2$ and any step $\gamma\rightarrow\gamma'$ in which a potential bride of $V_2$ is activated by $\mathcal{SSMM}$, we have $P(\gamma')<P(\gamma)$.
\end{lemma}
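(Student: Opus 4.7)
The plan is to do a case analysis on which of the three rules activates $v\in V_2$, and to track the resulting changes in $w, c, f$, comparing $P$ lexicographically. Since the daemon is central, only $pref_v$ changes in the step; however, the predicates $proposing, doomed, single, married, dead$ of the neighbors of $v$ depend on $pref_v$, so some $V_2$-neighbors' statuses may shift too. A neighbor of $v\in V_2$ necessarily lies in $V_1$ (hence is honest), but possibly in $V_1\setminus V_2$; only the statuses of $V_2$ brides contribute to $P$.

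For rule $(M)$, $v$ transitions from $single$ to $married$ (contributing $\Delta f=-1$). Each $V_2$-neighbor $u$ with $pref_u=v$ was $proposing$ and becomes either $married$ (if $u=next_v(v)$) or $doomed$ otherwise. Summing the contributions yields $\Delta(w+c+f)\le -1$, with the exact value depending on whether $next_v(v)$ itself lies in $V_2$. For rule $(S)$, $v$ transitions from $single$ to $proposing$ (so $\Delta w=+1$, $\Delta f=-1$); the guard forbids any neighbor to satisfy $pref_u=v$, and the chosen $u^*=next_v(null)$ stays $single$ because $v$ remains non-married. Hence no neighbor contributes, yielding $\Delta(w+c+f)=0$ and $\Delta(2c+f)=-1$, a strict decrease of the second coordinate.

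For rule $(A)$, $v$ was $doomed$ and becomes either $single$ or $dead$. Any $V_2$-neighbor $u'\neq pref_v$ with $pref_{u'}=v$ was itself $doomed$ (its preferred $v$ was aiming elsewhere) and becomes $proposing$; let $B$ be the set of such $u'$. A short count gives $\Delta(w+c+f)=-1+\Delta f_v$. If $v$ becomes $dead$, then $B=\emptyset$ forcibly (otherwise any $u'\in B$ would have $pref_{pref_{u'}}=null$ and so fail to be $married$, preventing $v$ from being dead), so $\Delta(w+c+f)=-1$. If $v$ becomes $single$, then $\Delta f_v=+1$ so the first coordinate is unchanged, but $\Delta(2c+f)=-1-2|B|<0$.

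The main bookkeeping hazard is rule $(M)$ in the sub-case $next_v(v)\notin V_2$: the proposing-to-married transition at the chosen mate is not recorded in $w$, and instead every $V_2$-neighbor of $v$ with $pref_u=v$ merely moves from $proposing$ to $doomed$. One must check that the induced $\Delta w=-|A|$ and $\Delta c=+|A|$ still combine with $\Delta f_v=-1$ to give a net strict decrease of the first coordinate. Once this sub-case is settled, each of the three rules yields a strict lexicographic decrease of $P$, as required.
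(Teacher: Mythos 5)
Your proposal is correct and follows essentially the same route as the paper: a case analysis on the three rules, tracking the lexicographic decrease of $P=(w+c+f,\,2c+f)$, with the same key observations (the chosen mate under $(S)$ stays single, doomed-to-$v$ neighbors become proposing under $(A)$, and $B=\emptyset$ when $v$ dies). The one sub-case you flag as still "to be checked" is immediate from the deltas you already wrote down, since $\Delta(w+c+f)=-|A|+|A|-1=-1<0$.
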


\begin{proof}
Let $\gamma$ be a configuration such that ${\gamma}\notin \mathcal{LC}_2$. Consider any step $\gamma\rightarrow\gamma'$ of $\mathcal{SSMM}$. Since the scheduling is central, at most one potential bride $v\in V_2$ is activated during $\gamma\rightarrow\gamma'$. Consider the following cases.

\noindent\textbf{Case 1:} $v\in V_2$ is activated by rule $(M)$ during $\gamma\rightarrow\gamma'$.\\
By construction, there exists $u\in N_v$ such that $v$ and $u$ become married during this step. Hence, the function $w+f+c$ decreases by at least 1 during this step. Consequently, we have $P(\gamma')<P(\gamma)$.

\noindent\textbf{Case 2:} $v\in V_2$ is activated by rule $(S)$ during $\gamma\rightarrow\gamma'$.\\
As $pref_v=null$ and there exists $u\in N_v$ such that $pref_u=null$ in ${\gamma}$, $single_v$ holds in ${\gamma}$. On the other hand, $pref_v=u$ and $pref_u=null$ hold in ${\gamma'}$, that implies that $proposing_v$ holds in ${\gamma'}$. Hence, the function $2c+f$ decreases at least by one during $\gamma\rightarrow\gamma'$.

As rule $(S)$ is enabled in ${\gamma}$, we can deduce that no neighbor of $v$ is proposing after it (otherwise, the rule $(S)$ is not enabled for $v$). So no proposing node in ${\gamma}$ becomes single or doomed in ${\gamma'}$. If a neighbor of $v$ is single in ${\gamma}$, it remains single or become dead in ${\gamma'}$. We can conclude that the function $c+f+w$ remains equal and that the function $2c+f$ decreases by exactly one during $\gamma\rightarrow\gamma'$. Consequently, we have $P(\gamma')<P(\gamma)$.

\noindent\textbf{Case 3:} $v\in V_2$ is activated by rule $(A)$ during $\gamma\rightarrow\gamma'$.\\
As there exists $u\in N_v$ such that $pref_v=u$, $pref_u=w$, and $w\neq v$ in ${\gamma}$, we can deduce that $doomed_v$ holds in ${\gamma}$. As $pref_v=null$ holds in ${\gamma'}$, we know that $single_v\vee dead_v$ holds in ${\gamma'}$. Hence, the function $2c+f$ decreases by at least one during $\gamma\rightarrow\gamma'$.

If a neighbor of $v$ is single in ${\gamma}$, then it remains single in ${\gamma'}$ (note that $u$ cannot become dead in ${\gamma'}$ since $pref_v=null$). If a neighbor of $v$ is proposing in ${\gamma}$, it remains in this state because it cannot wait for $v$ (recall that $pref_v\neq null$ in ${\gamma}$). If a neighbor of $v$ is doomed to $v$ in ${\gamma}$, then it becomes proposing in ${\gamma'}$. Note that, if $u\in V_2$, $u$ leads to a supplementary decreasing of the function $2c+f$ while $c+w+f$ remains equal but, if $u\notin V_2$, then functions $2c+f$ and $c+w+f$ remains equal.  We can conclude that the function $2c+f$ decreases by at least one during $\gamma\rightarrow\gamma'$. Consequently, we have $P({\gamma'})<P({\gamma})$.
\end{proof}

\begin{lemma}
\label{lemma7}
In any execution, $P$ only increases a finite number of times.
\end{lemma}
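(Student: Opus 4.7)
The plan is to localize which kinds of actions can possibly raise $P$ and then to show, using Lemma~\ref{lemma2}, that there can be only finitely many such actions. First I would observe that $P(\gamma)$ depends solely on the statuses (one of \emph{proposing}, \emph{doomed}, \emph{single}) of the brides in $V_2$, and that each such status depends only on $pref_v$ and on $pref_u$ for $u\in N_v$. Since every $v\in V_2$ is at distance strictly more than $2$ from any Byzantine bride, each neighbor of $v$ lies in $V_1$ and is in particular honest. Consequently, no Byzantine action and no action of a bride in $V\setminus V_1$ can modify $P$. Lemma~\ref{lemma6} rules out $V_2$-actions as a source of increase outside $\mathcal{LC}_2$, and Lemma~\ref{lemma3} rules them out inside $\mathcal{LC}_2$. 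Hence every step at which $P$ strictly grows is produced by an action of some honest bride $x\in V_1\setminus V_2$.

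Next I would dissect the effect of each of $x$'s three rules on the statuses of its $V_2$ neighbors. Rule $(S)$ requires that no neighbor of $x$ currently satisfies $pref_w=x$; in particular no $V_2$ neighbor $v$ has $pref_v=x$, and since $married_x$ stays false across the step, no $V_2$ status changes. Rule $(A)$ sends $pref_x$ from some $w$ to $null$: any $V_2$ neighbor $v$ with $pref_v=x$ switches from \emph{doomed} to \emph{proposing} (strictly lowering the second component of $P$), while neighbors with $pref_v=null$ see $married_x$ remain false. Thus neither $(S)$ nor $(A)$ can increase $P$. Rule $(M)$, by contrast, sends $pref_x$ from $null$ to a selected mate $w^{\star}$ with $pref_{w^{\star}}=x$; then every $V_2$ neighbor $v$ satisfying $pref_v=x$ and $v\neq w^{\star}$ is shifted from \emph{proposing} to \emph{doomed}, which strictly raises the second component of $P$. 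So rule $(M)$ is the only mechanism by which $P$ can increase.

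To close the argument I would invoke Lemma~\ref{lemma2}: immediately after $x$ fires $(M)$, $x$ is married, and because $x\in V_1$, $married_x$ persists in every subsequent configuration; in particular $x$ is never again enabled by any of $(M)$, $(S)$, $(A)$. Hence each bride of $V_1\setminus V_2$ executes rule $(M)$ at most once in the whole execution, so the total number of steps at which $P$ strictly grows is bounded by $|V_1\setminus V_2|\leq n$, a finite quantity.

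I expect the main obstacle to be the careful bookkeeping for rule $(M)$: one must verify that the collateral transitions occurring in the same step (namely a $V_2$ neighbor with $pref_v=null$ possibly flipping from \emph{single} to \emph{dead}, or the mate $w^{\star}$ itself flipping from \emph{proposing} to \emph{married} when $w^{\star}\in V_2$) can only further \emph{decrease} $P$, so that the overall count of $P$-increasing steps is really controlled by the number of $(M)$-firings by $V_1\setminus V_2$ brides and not polluted by side effects.
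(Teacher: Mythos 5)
Your proof is correct and follows essentially the same route as the paper's: restrict $P$-increasing steps to brides of $V_1\setminus V_2$ (Lemma~\ref{lemma6} handles $V_2$, locality of $P$ handles the rest), check rule by rule that only $(M)$ can raise $P$, and invoke Lemma~\ref{lemma2} to conclude each such bride marries---and hence acts---at most once, bounding the number of increases by $|V_1\setminus V_2|$. The only nuance, shared with the paper, is that a bride at distance two from a $V_2$ bride can in fact modify $P$ (by becoming married and turning a $single$ bride $dead$ through their common neighbor), but since this only decreases $P$ your conclusion is unaffected.
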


\begin{proof}
Let $v$ be a potential bride of $V$. Let $e$ be an execution in which $P$ is not monotonically decreasing. Consider the first step $\gamma\rightarrow\gamma'$ of $e$ in which $P$ increases and in which $v$ is activated.

By Lemma~\ref{lemma6}, we know that $v\notin V_2$ (otherwise, we have a contradiction with the decreasing of $P$). Then, by construction of $P$, we know that $v\in V_1$ (since actions of potential brides of $V_0$ have no effects on values of $P$). Consequently, $v\in V_1\setminus V_2$.

Assume that $v$ executes rule $(S)$ during the step $\gamma\rightarrow\gamma'$. Observe that $v$ is $single$ in $\gamma$ but becomes $proposing$ in $\gamma$. Moreover, any neighbor of $v$ that is $single$ in $\gamma$ remains in this state in $\gamma'$ and there is no neighbor $dead$, $proposing$ after $v$, or $doomed$ after $v$ in $\gamma$ (by construction of the rule). By Lemma~\ref{lemma2}, any neighbor of $v$ in $V_2$ remains married in $\gamma'$ if it is $married$ in $\gamma$. Hence, the action of $v$ does not modify the state of its neighbors and $P$ is not modified, that contradicts the assumption.

Assume that $v$ executes rule $(A)$ during the step $\gamma\rightarrow\gamma'$. Observe that $v$ is $doomed$ in $\gamma$ but becomes $single$ in $\gamma'$. Moreover, any neighbor of $v$ that is $single$ in $\gamma$ remains in this state in $\gamma'$. By construction of the rule, there is no neighbor of $v$ that is $dead$ or $proposing$ after $v$ in $\gamma$. Any neighbor of $v$ that is $doomed$ after $v$ in $\gamma$ becomes $proposing$ in $\gamma'$. By Lemma~\ref{lemma2}, any neighbor of $v$ in $V_2$ remains married in $\gamma'$ if it is $married$ in $\gamma$. Hence, the action of $v$ leads to a strict decreasing of $P$, that contradicts the assumption.

Consequently, we know that $v$ is activated by rule $(M)$ during the step $\gamma\rightarrow\gamma'$. Then, by construction of the rule, we know that $v$ becomes $married$ in $\gamma'$ and remains in this state during the whole execution (by Lemma~\ref{lemma2}). In particular, $v$ is never activated in the sequel of the execution.

In conclusion, we obtain that each potential bride of $v\in V_1\setminus V_2$ executes at most one step that decreases $P$. As the number of potential bride of $v\in V_1\setminus V_2$ is finite, we obtain the result. 
\end{proof}

\begin{lemma}
\label{lemma8}
For any configuration $\gamma_0\notin\mathcal{LC}_2$ and any execution $e=\gamma_0,\gamma_1,\gamma_2,\ldots$ starting from $\gamma_0$, there exists a configuration $\gamma_i$ such that $P(\gamma_{i+1})<P(\gamma_i)$.
\end{lemma}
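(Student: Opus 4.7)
My plan is to argue by contradiction. Suppose the execution $e = \gamma_0, \gamma_1, \ldots$ starting from $\gamma_0 \notin \mathcal{LC}_2$ satisfies $P(\gamma_{i+1}) \ge P(\gamma_i)$ for every $i$. By Lemma~\ref{lemma7}, $P$ increases only finitely often in $e$, so combined with non-decreasingness there is some index $N$ with $P(\gamma_i) = P(\gamma_N)$ for all $i \ge N$. The contrapositive of Lemma~\ref{lemma5} applied at $\gamma_0$ yields $P(\gamma_0) \neq (0,0)$, hence $P(\gamma_N) \geq P(\gamma_0) > (0,0)$; unpacking the definitions of $w$, $c$, and $f$ then produces at least one bride $v \in V_2$ in state $proposing_v$, $doomed_v$, or $single_v$ at $\gamma_N$. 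Lemma~\ref{lemma6} now forbids any activation of a $V_2$-bride after step $N$ (else $P$ would strictly decrease), so the variables $pref$ and $old\_pref$ of every $V_2$-bride are frozen throughout the suffix $\gamma_N, \gamma_{N+1}, \ldots$.

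I would then case-split on the predicate of $v$, using the fairness of the central daemon to force a contradiction with either the $V_2$-freeze or the constancy of $P$. In the $doomed_v$ case, rule $(A)$ is continuously enabled at $v$, so fairness forces $v$ to be activated, contradicting the freeze. In the $proposing_v$ case with partner $u = pref_v$, the frozen pair $pref_v = u$ and $pref_u = null$ keeps $(M)$ continuously enabled at $u$; necessarily $u \notin V_2$, so fairness forces $u$ to execute $(M)$, transitioning $v$ either to $married$ (strictly decreasing $w + c + f$) or to $doomed$ (strictly increasing $2c + f$), both changing $P$ and contradicting the suffix-constancy. In the $single_v$ case, non-enablement of $(M)$ and $(S)$ at $v$ rules out any neighbor with $pref \in \{null, v\}$, so every non-married neighbor $u$ of $v$ is either $doomed$ or $proposing$. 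If some such $u$ is $doomed$, then $u \notin V_2$ and fairness on $(A)$ at $u$ sets $pref_u := null$, instantly enabling $(S)$ at $v$ and contradicting the freeze. Otherwise every non-married neighbor of $v$ is proposing to some $w$ with $pref_w = null$; again $w \notin V_2$, fairness forces $(M)$ at $w$, and the resulting transition for the proposer $u$ either leads to the $doomed$ sub-case above or makes $u$ married. Iterating (at most $|N_v|$ times) finally marries every non-married neighbor of $v$, turning $v$ dead and strictly decreasing $P$ via the $f$ component, again contradicting $P$-constancy.

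The main obstacle is the $single_v$ case: the contradiction is obtained only after a short, bounded chain of forced non-$V_2$ activations, and one must carefully verify at each link that the guard of the invoked rule remains satisfied under the $V_2$-freeze so that fairness really does compel the activation. The other delicate step is the very first one, namely unpacking $P(\gamma_N) > (0,0)$ to extract an explicit $V_2$-bride in one of the three non-legitimate states, which relies on the definitions of $w$, $c$, and $f$ rather than on Lemma~\ref{lemma5} alone.
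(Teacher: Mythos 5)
Your overall strategy matches the paper's: argue by contradiction, use Lemma~\ref{lemma6} to conclude that no bride of $V_2$ is ever activated, extract a $V_2$-bride $v$ that is $proposing$, $doomed$ or $single$, and drive a contradiction via fairness. Your extra step of invoking Lemma~\ref{lemma7} to make $P$ eventually \emph{constant} is a genuine strengthening that the paper does not use: it lets you treat any later change of $P$ (increase or decrease) as a contradiction, which does shorten the $proposing$ case relative to the paper's Case~1.2.

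There is, however, a real gap in the $single_v$ case, and it sits exactly where the protocol's $old\_pref$ variable and round-robin order are indispensable. Take a non-married neighbour $u\in V_1\setminus V_2$ of $v$ that has a neighbour $w$ lying at distance $1$ from a Byzantine bride $x$ (so $d(v,x)=3$ and $v\in V_2$ is consistent). Nothing in your chain of forced activations prevents $u$ from cycling forever: $u$ proposes to $w$, $w$ marries $x$, the Byzantine $x$ divorces, $w$ resets via $(A)$, $u$ becomes $doomed$, executes $(A)$, becomes $single$, and re-proposes to $w$ --- never proposing to $v$ and never marrying. Your ``iterating at most $|N_v|$ times'' bounds the wrong quantity: the issue is not how many neighbours $v$ has, but whether one neighbour's propose/doom/abandon loop terminates. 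The paper closes this loop by appealing to the round-robin policy on $old\_pref_u$, which forces $u$ to eventually marry or propose to $v$; your proof never mentions this mechanism, yet without it the claim is false. A second, smaller issue: several of your ``continuously enabled'' assertions are only momentary. In the $doomed_v$ case the guard of $(A)$ at $v$ is destroyed if $u=pref_v$ itself executes $(A)$; and ``instantly enabling $(S)$ at $v$'' does not contradict the freeze, because the stated fairness only guarantees activation of brides that are \emph{infinitely often} enabled, not of brides enabled at a single instant. These can be patched inside your framework (the disabling transition itself changes $P$, or the enablement recurs infinitely often), but as written those steps do not follow.
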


\begin{proof}
Let $\gamma_0$ be a configuration such that $\gamma_0\notin\mathcal{LC}_2$. By contradiction, assume that there exists an execution $e=\gamma_0,\gamma_1,\gamma_2,\ldots$ starting from $\gamma_0$ such that for any $i\in \mathbb{N}$, $P(\gamma_{i+1})\geq P(\gamma_i)$. By Lemma~\ref{lemma6}, that implies that no potential bride of $V_2$ is activated in any step of $e$.

As $\gamma_0\notin\mathcal{LC}_2$, there exists $v\in V_2$ such that $spec(v)$ does not hold in $\gamma_0$. Hence, $v$ is $proposing$, $doomed$, or $single$ in $\gamma_0$. Consider the following cases.

\noindent\textbf{Case 1:} $v$ is $proposing$ in $\gamma_0$. By definition, we have $\exists u\in N_v, (pref_v=u)\wedge (pref_u=null)$. 

\noindent\textbf{Case 1.1:} If $u\in V_2$, then we can observe that $u$ is enabled by $(M)$ in $\gamma_0$. Since $v$ and $u$ are never activated in $e$ (by Lemma~\ref{lemma6}), then $u$ remains continuously enabled. As the daemon is fair, $u$ is activated in a finite time, that is  contradictory.

\noindent\textbf{Case 1.2:} If $u\notin V_2$, then we can observe that $u$ is continuously enabled by $(M)$ from $\gamma_0$ (since $v$ is never activated). As the daemon is fair, $u$ executes $(M)$ in a finite time and becomes $married$. If $u$ is married with $v$, then $P$ decreases, that is contradictory. We can deduce that $u$ becomes $married$ with another potential bride and is never activated afterwards (by Lemma~\ref{lemma2}). Then, $v$ becomes $doomed$ and continuously enabled by rule $(A)$. As the daemon is fair, $v$ is activated in a finite time and becomes $dead$ or $single$. In both cases, $P$ decreases, that is contradictory.

\noindent\textbf{Case 2:} $v$ is $doomed$ in $\gamma_0$. By definition, we have $(pref_v=u)\wedge (pref_u=r)\wedge (r\neq v)$.
 
\noindent\textbf{Case 2.1:} If $u$ is activated in $e$, then we can observe that $v$ is continuously enabled by rule $(A)$. As the daemon is fair, $v$ is activated in a finite time, that is contradictory.

\noindent\textbf{Case 2.2:} If $u$ is activated in $e$, then we know that $u\notin V_2$ (otherwise, we obtain a contradiction). Before the first activation of $u$, we have $(pref_v=u)\wedge (pref_u=r)\wedge (pref_r=w)\wedge (r\neq v)\wedge (w\neq u)$ since the only enabled rule when $pref_u\neq null$ is $(A)$. After the execution of $(A)$ by $u$, $v$ becomes $proposing$ after $u$ and we can refer to case 1.2.

\noindent\textbf{Case 3:} If $single_v$ holds in $\gamma_0$, then we have $(pref_v=null)\wedge (\exists u\in N_v,married_u=false)$ by definition. Let us study the following cases.

\noindent\textbf{Case 3.1:} $u\in V_2$.\\
By Lemma~\ref{lemma6}, we know that $u$ is never activated in $e$. Consequently, the fairness of the daemon allows us to conclude that $pref_u=r\neq v$. Indeed, in the contrary case, $v$ is continuously enabled by $(M)$ if $pref_u=v$ and $u$ and $v$ are continuously enabled by $(M)$ or by $(S)$ if $pref_u=null$.

If $u$ is $doomed$, we can refer to case 2 with $u$ playing the role of $v$ while if $u$ is $proposing$, we can refer to case 1 with $u$ playing the role of $v$, that ends this case.

\noindent\textbf{Case 3.2:} $u\in V_1\setminus V_2$\\
Observe that $u$ cannot be $dead$ since $v$ is $single$. If $u$ is $single$, then $u$ is continuously enabled by $(S)$ or by $(M)$. If $u$ is $doomed$, then $u$ is enabled by $(A)$. If $u$ is $proposing$ after a potential bride $r$ different than $v$, then $r$ is continuously enabled by $(S)$ or $(M)$. The fairness of the daemon implies that $r$ is activated in a finite time and hence that $u$ remains $proposing$ only a finite time.

Consequently, we know that, while $u$ is not activated, remains not $married$, and is not $proposing$ after $v$, $u$ is infinitely often enabled. The fairness of the daemon implies that, while $u$ is not $married$ nor $proposing$ after $v$, $u$ is activated in a finite time. The construction of the algorithm and the round robin policy used for the management of the pointer ensure us that $u$ is either $married$ or $proposing$ after $v$ in a finite time. 

If $u$ becomes $proposing$ after $v$, then $v$ becomes enabled by $(M)$ and $u$ is not enabled while $v$ is not activated. Hence, the fairness of the daemon leads to an activation of $v$ in a finite time, that is contradictory.

If $u$ becomes $married$, then $v$ can remain $single$ or become $dead$. The first case allows us to refer to case 3 again (but this case can arise only a finite number of times since the number of neighbors of $v$ is finite). In the second case, we obtain a contradiction since $P$ strictly decrease.

In any case, we obtain a contradiction in a finite time and we can deduce the lemma.
\end{proof}

This set of Lemmas allows us to conclude on the following results:
\begin{lemma}
\label{lemma9}
Any execution of $\mathcal{SSMM}$ reaches a configuration of $\mathcal{LC}_2$ in a finite time under the central fair daemon.
\end{lemma}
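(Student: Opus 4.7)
The plan is to argue by contradiction, using the variant function $P$ together with the machinery established in Lemmas~\ref{lemma5}--\ref{lemma8}. Suppose there exists an execution $e=\gamma_0,\gamma_1,\ldots$ of $\mathcal{SSMM}$ under a central fair daemon that never reaches a configuration of $\mathcal{LC}_2$; I aim to exhibit an infinite strictly decreasing sequence of values of $P$, contradicting the fact that $P$ takes values in a well-founded set.

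First, I would apply Lemma~\ref{lemma7}, which guarantees that $P$ increases only finitely many times along $e$. Hence there exists an index $k\in\mathbb{N}$ such that $P(\gamma_{i+1})\leq P(\gamma_i)$ for every $i\geq k$; in other words, $P$ is monotonically non-increasing on the suffix $\gamma_k,\gamma_{k+1},\ldots$ (with respect to the lexicographic order on $\mathbb{N}^2$ implicit in the definition of $P$).

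Next, I would use the assumption that $e$ never enters $\mathcal{LC}_2$: every configuration of the suffix lies outside $\mathcal{LC}_2$. I would then iteratively apply Lemma~\ref{lemma8}, starting from $\gamma_k$, to build an infinite increasing sequence of indices $k\leq i_1<i_2<\ldots$ such that $P(\gamma_{i_j+1})<P(\gamma_{i_j})$ for every $j$. Combined with the non-increasing property on the same suffix, this yields $P(\gamma_{i_1})>P(\gamma_{i_2})>\ldots$, an infinite strictly decreasing sequence in $\mathbb{N}^2$ under the lexicographic order, which is impossible. The contradiction shows that $e$ must eventually reach a configuration $\gamma_N\in\mathcal{LC}_2$ (equivalently, $P$ must eventually attain $(0,0)$, at which point Lemma~\ref{lemma5} places $\gamma_N$ in $\mathcal{LC}_2$).

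No step of the argument should present a real obstacle, since Lemmas~\ref{lemma5}--\ref{lemma8} do the heavy lifting: Lemma~\ref{lemma7} tames the finitely many Byzantine-induced increases of $P$, while Lemma~\ref{lemma8} guarantees that outside $\mathcal{LC}_2$ progress is always made within finite time by the fair central daemon. The only subtlety is to be careful that the infinite chain of strict decreases is extracted from the suffix on which $P$ is non-increasing, so that the strict decreases genuinely accumulate rather than being cancelled by subsequent increases; ensuring this is precisely the reason why Lemmas~\ref{lemma7} and \ref{lemma8} must be combined rather than either being used in isolation.
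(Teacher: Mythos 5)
Your proposal is correct and follows essentially the same route as the paper, which leaves Lemma~\ref{lemma9} as a direct consequence of Lemmas~\ref{lemma5}--\ref{lemma8}: Lemma~\ref{lemma7} yields a suffix on which $P$ is non-increasing, Lemma~\ref{lemma8} forces infinitely many strict decreases on that suffix if $\mathcal{LC}_2$ is never reached, and well-foundedness of the lexicographic order on $\mathbb{N}^2$ gives the contradiction. Your care in extracting the strict decreases only from the non-increasing suffix is exactly the right point to attend to, and nothing is missing.
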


\begin{theorem}
\label{theorem1}
$\mathcal{SSMM}$ is a $(2,n)$-strictly stabilizing protocol for $spec$ under the central fair daemon.
\end{theorem}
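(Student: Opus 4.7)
The plan is straightforward: Theorem~\ref{theorem1} is essentially the conjunction of a closure result (Lemma~\ref{lemma4}) and a convergence result (Lemma~\ref{lemma9}), packaged to fit Definition~\ref{def:cfstabilizing}. I would begin by unfolding what $(2,n)$-strict stabilization requires: for an arbitrary number of Byzantine-faulty potential brides (up to $n$), every execution $e=\gamma_0,\gamma_1,\ldots$ of $\mathcal{SSMM}$ must contain some configuration $\gamma_i$ that is $(2,n)$-contained for $spec$, with no constraint on the initial configuration $\gamma_0$.

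Next, I would apply Lemma~\ref{lemma9}: under the central fair daemon, any execution starting from any $\gamma_0 \in \Gamma$ reaches, in a finite number of steps, a configuration $\gamma_i \in \mathcal{LC}_2$. This provides the index $i$ required by Definition~\ref{def:cfstabilizing}. Finally, I would invoke Lemma~\ref{lemma4}, which asserts that every configuration in $\mathcal{LC}_2$ is $(2,n)$-contained for $spec$: the witness $\gamma_i$ produced by Lemma~\ref{lemma9} is therefore a $(2,n)$-contained configuration, completing the proof obligation.

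The genuinely hard work has already been spent on Lemmas~\ref{lemma2}--\ref{lemma8}: establishing that $married$ and $dead$ are preserved at distance $\geq 1$ and $\geq 2$ respectively from Byzantine brides, defining the variant $P$ restricted to $V_2$, and proving that although $P$ can occasionally increase due to actions of potential brides outside $V_2$, such increases occur only finitely often (Lemma~\ref{lemma7}) while every non-legitimate configuration eventually admits a step that strictly decreases $P$ (Lemma~\ref{lemma8}). Once those are in hand, there is no remaining obstacle for the theorem itself; the only care needed is to match the quantifiers of Definition~\ref{def:cfstabilizing} (any initial configuration, up to $n$ Byzantine brides) with the statements of Lemmas~\ref{lemma4} and \ref{lemma9}, both of which are formulated without any upper bound on the number of Byzantine potential brides and without any restriction on $\gamma_0$.
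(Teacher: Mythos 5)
Your proposal matches the paper's intended argument exactly: the paper derives Theorem~\ref{theorem1} directly from the conjunction of Lemma~\ref{lemma4} (closure of $\mathcal{LC}_2$, i.e.\ every configuration of $\mathcal{LC}_2$ is $(2,n)$-contained) and Lemma~\ref{lemma9} (convergence to $\mathcal{LC}_2$ under the central fair daemon), instantiating Definition~\ref{def:cfstabilizing}. Nothing is missing; your careful check that the quantifiers of the definition are met by the two lemmas is exactly the right (and only) remaining obligation.
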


\subsection{Optimality of Containment Radius}

This section is devoted to the impossibility result that proves the optimality of the containment radius performed by $\mathcal{SSMM}$.

\begin{theorem}
\label{theorem2}
There exists no $(1,1)$-strictly stabilizing protocol for $spec$ under any daemon.
\end{theorem}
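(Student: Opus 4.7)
The plan is to derive a contradiction by an indistinguishability argument on the smallest nontrivial graph, the path $G=v_0-v_1-v_2$, whose only maximal marriages are $M_0=\{\{v_0,v_1\}\}$ and $M_2=\{\{v_1,v_2\}\}$. Two preliminaries drive the argument. First, setting $f=0$ in Definition~\ref{def:cfstabilizing} makes every potential bride $1$-honest, so any $(1,1)$-strictly stabilizing protocol $\mathcal{A}$ is in particular self-stabilizing for $spec$; hence a Byzantine-free execution of $\mathcal{A}$ on $G$ must eventually stabilize the output variables at either $M_0$ or $M_2$, since these are the only configurations in which every bride satisfies $married_v\vee dead_v$. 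Second, in any $(1,1)$-contained configuration of $G$ where $v_2$ is Byzantine, the unique $1$-honest bride $v_0$ must satisfy $married_{v_0}$ rather than $dead_{v_0}$: the latter would require $pref_{v_1}=v_2$ and $pref_{v_2}=v_1$, and a single Byzantine step modifying $pref_{v_2}$ suffices to invalidate it; the symmetric statement holds when $v_0$ is Byzantine.

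I would then run $\mathcal{A}$ from the configuration $\gamma_0$ coding $M_2$ (namely $pref_{v_0}=null$, $pref_{v_1}=v_2$, $pref_{v_2}=v_1$) under some fair central schedule, and split on the Byzantine-free execution. If the output variables stay frozen at $\gamma_0$, I replay the run with $v_2$ Byzantine making no move; the execution is again constant at $\gamma_0$, which by the second preliminary is not $(1,1)$-contained, so no contained configuration is ever reached, contradicting $(1,1)$-strict stabilization. If instead the output variables move to eventually stabilize at $M_0$, I replay the run with $v_0$ Byzantine mimicking the honest trajectory exactly; by construction the trajectory is identical and still reaches $M_0$. I then argue that no configuration along this trajectory is $(1,1)$-contained. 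Configurations where $v_2$ is neither married to $v_1$ nor dead fail $spec(v_2)$ outright; configurations where $\{v_1,v_2\}$ is the current marriage (in particular $\gamma_0$) are not contained because the very mimicking continuation later has $v_1$ abandon $v_2$, breaking $spec(v_2)$ in that execution; and any configuration where $v_2$ is dead forces $\{v_0,v_1\}$ to be the current marriage, and is broken by a single Byzantine move setting $pref_{v_0}:=null$, which invalidates $v_1$'s marriage and hence $dead_{v_2}$.

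The main obstacle is the last step of the second case, namely ruling out every intermediate configuration as a candidate for containment. The crux is the ``for any execution starting from $\gamma$'' quantifier in Definition~\ref{def:cfcontained}: it is enough to exhibit one Byzantine continuation that violates the spec of the relevant $1$-honest bride. For configurations carrying $\{v_1,v_2\}$, the mimicking continuation itself supplies such a witness; for configurations carrying $\{v_0,v_1\}$, a one-step Byzantine deviation of $v_0$ does the job. This yields a uniform refutation across the whole execution and completes the contradiction.
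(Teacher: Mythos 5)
Your proof is correct, but it takes a genuinely different route from the paper's. The paper works on a chain of five brides $v_0,\dots,v_4$ with $v_0$ Byzantine, exhibits the single configuration in which $v_0$--$v_1$ and $v_3$--$v_4$ are married and $v_2$ is dead (so every $1$-honest bride satisfies $spec$), and observes that one Byzantine step of $v_0$ destroys $married_{v_1}$ and hence $dead_{v_2}$, so that configuration is not $(1,1)$-contained; the impossibility is then asserted rather tersely from the fact that no protocol can prevent that step. You instead work on the minimal three-node path, extract from ``at most $f=1$ faults'' the Byzantine-free self-stabilization of any candidate protocol, characterize the only two legitimate outputs $M_0$ and $M_2$, and then run a replay/indistinguishability argument that systematically rules out \emph{every} configuration of a suitably chosen execution as a candidate for containment. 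Both arguments ultimately rest on the same observation, namely that $dead_v$ is $2$-restrictive (it constrains $pref$ two hops away through $married_u$), but your version is the more complete impossibility proof: the paper only certifies that one particular configuration fails to be contained, whereas Definition~\ref{def:cfstabilizing} requires refuting the existence of \emph{any} contained configuration along some execution, which is exactly what your case analysis delivers. One small presentational point: in your first case it is cleaner to let the Byzantine $v_2$ mimic the honest $v_2$ exactly rather than ``make no move,'' since in the shared state model neighbours see the full state and an honest $v_2$ could in principle update non-output variables; the conclusion is unaffected, because every configuration of that run keeps the $M_2$ output and your second preliminary already shows that none of them is $(1,1)$-contained.
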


\begin{proof}
Consider a Byzantine city reduced to a chain of 5 potential brides labelled from left to right by $v_0$, $v_1$, ..., $v_4$. Consider the configuration $\gamma$ in which $v_0$ (resp. $v_3$) is married with $v_1$ (resp. $v_4$). Hence, $v_2$ is dead. Observe that $\gamma$ belongs to $\mathcal{LC}_1$ if the potential bride $v_0$ is Byzantine-faulty (\emph{i.e} any potential bride of $V_1$ is either $married$ or $dead$). 

By definition, any $(1,1)$-strictly stabilizing protocol for $spec$ must ensure the closure of $\mathcal{LC}_1$ for any execution starting from $\gamma$. But we can observe that it is not the case. Indeed, it is sufficient that the Byzantine-faulty potential bride breaks its marriage with $v_1$ during the first step for violating the closure of $\mathcal{LC}_1$ (since $v_2\in V_1$ becomes $single$). As no protocol can prevent a Byzantine fault by definition, we have the result.
\end{proof}

\section{Related Works}

Self-stabilization~\cite{D74j,D00b,T09bc} is a versatile technique that permits forward recovery from any kind of transient faults, while Byzantine fault-tolerance~\cite{LSP82j} is traditionally used to mask the effect of a limited number of malicious faults. In the context of self-stabilization, the first algorithm for computing a maximal marriage was given by Hsu and Huang \cite{HH92j}. Goddard et al.~\cite{GHJS03c} later gave a synchronous self-stabilizing variant of Hsu and Huang's algorithm. Finally, Manne et al.~\cite{MMPT09j} gave an algorithm for computing a maximal marriage under the distributed daemon. When it comes to improving the $\frac{1}{2}$-approximation induced by the maximal mariage property, Ghosh et al.~\cite{GGHSP95c} and Blair and Manne \cite{BM03c} presented a framework that can be used for computing a maximum mariage in a tree, while Goddard et al.~\cite{GHS06c} gave a self-stabilizing algorithm for computing a $\frac{2}{3}$-approximation in anonymous rings of length not divisible by three. Manne et al. later generalized this result to any arbitrary topology \cite{MMPT11j}. Note that contrary to our proposal, none of the aforementioned marriage construction algorithms can tolerate Byzantine behaviour. 

Making distributed systems tolerant to both transient and malicious faults is appealing yet proved difficult~\cite{DW04j,DD05c} as impossibility results are expected in many cases (even with complete communication topology and in a synchronous setting). A promising path towards multi-tolerance to both transient and Byzantine faults is Byzantine containment. For local tasks (\emph{i.e.} tasks whose correctness can be checked locally, such as vertex coloring, link coloring, or dining philosophers), strict stabilization~\cite{NA02c,MT07j} permits to contain the influence of malicious behavior to a fixed radius. This notion was further generalized for global tasks (such as spanning tree construction) using the notion of topology-aware strict stabilization \cite{DMT10ca,DMT10cd}. Our proposal is a strictly stabilizing maximal marriage protocol that has optimal containement radius.

\section{Conclusion}

We investigated the problem of recovering a catastrophic war by establishing long standing marriages, despite starting from an arbitrarily devastated state and having traitors trying make the global process fail. We presented evidence that no protocol can be completely resilient to traitors (as far as their influence containment is concerned), and designed and formally proved a protocol to solve the problem that is optimal in that respect. Further work is still needed for determining the global possible efficiency of the marriage process. It is known that in a scenario without traitors, a given maximal marriage~\cite{HH92j,MMPT09j} is a factor $2$ from the optimal (over all possible maximal marriages), yet more efficient solutions (with respect to the approximation ration) are possible~\cite{MMPT11j}. Extending those works to Byzantine-faulty setting is a challenging further work.

\bibliographystyle{plain}
\bibliography{biblio}

\end{document}